\newtheorem{Thm}{Theorem}[section]
\theoremstyle{definition}
\newtheorem{Theorem}[Thm]{Theorem}
\newtheorem{Lemma}[Thm]{Lemma}
\newtheorem{Corollary}[Thm]{Corollary}
\newtheorem{Proposition}[Thm]{Proposition}
\newtheorem{Definition}[Thm]{Definition}
\newtheorem{Example}[Thm]{Example}
\newtheorem{Exercise}{Exercise}
\theoremstyle{remark}
\newtheorem{Remark}{Remark}
\font\sy=cmsy10
\font\ym=msbm10
\newcommand{\Aut}{{\rm Aut}}
\renewcommand{\P}{\text{\ym P}}
\newcommand{\A}{\text{\ym A}}
\newcommand{\B}{\text{\ym B}}
\newcommand{\R}{\text{\ym R}}
\newcommand{\bS}{\text{\ym S}}
\newcommand{\T}{\text{\ym T}}
\newcommand{\C}{\text{\ym C}}
\newcommand{\cE}{{\hbox{\sy E}}}
\newcommand{\cH}{{\hbox{\sy H}}}
\newcommand{\cK}{{\hbox{\sy K}}}
\newcommand{\cL}{{\hbox{\sy L}}}
\newcommand{\sB}{\mathscr B}
\newcommand{\sC}{\mathscr C}
\newcommand{\sE}{\mathscr E}
\newcommand{\sH}{\mathscr H}
\newcommand{\sL}{\mathscr L}
\newcommand{\sQ}{\mathscr Q}
\newcommand{\sS}{\mathscr S}
\newcommand{\sW}{\mathscr W}
\newcommand{\Hom}{\hbox{\rm Hom}}
\newcommand{\End}{\hbox{\rm End}}
\title[]
{Gaussian elements in CCR algebras}
\author[Yamagami Shigeru]{Yamagami Shigeru}
\begin{document}
\maketitle
\begin{center}
Graduate School of Mathematics
\end{center}
\begin{center}
Nagoya University
\end{center}
\begin{center}
Nagoya, 464-8602, JAPAN
\end{center}


\begin{abstract}
A system of matrix units in the Weyl algebra of convolution type
is constructed with the aid of a Gaussian element
so that it includes von Neumann's minimal projection,
which explicitly shows that the associated C*-algebra is a compact operator algebra.
The spectral decomposition of an arbitrary Gaussian element is then worked out
by utilizing the diagonal projections
in the matrix units.
\end{abstract}

\baselineskip=14pt

\section*{Introduction}

The celebrated Stone-von Neumann theorem on the uniqueness of unitary representations of the CCR is sitting
at a branching point of many disciplines and opened a way to a variety of developments both in mathematics
and in physics (\cite{RoJ}). The original proof of von Neumann is based on a Gaussian expression of a certain minimal
projection in the twisted convolution algebra related to the Weyl form of the CCR (\cite{vN}).

Since the way of its proof shows that all representations are unique up to multiplicities,
the associated C*-algebra of convolution type turns out to be a compact operator algebra
thanks to an affirmative answer to Naimark's problem.

We shall here supplement this fact by constructing
matrix units as differentiated Gaussian elements in the convolution Weyl algebra
so that the von Neumann projection are members of them, which enables us to explicitly identify
the convolution type C*-algebra with a compact operator algebra.

The spectral property of Gaussian elements, which is more or less known among specialists
(see \cite[\S~5.5]{Ho} for example),
is also described in terms of these matrix units.

\section{Weyl Relations and Convolution Algebras}
Originally the canonical commutation relations (CCR) were set forth by M.~Born and P.~Jordan
as $[q_j,p_k] = i \delta_{j,k}$, which were afterwards rephrased by H.~Weyl in the form
\[
U_p(x)U_q(y) = e^{ix\cdot y} U_q(y)U_p(x),
\qquad
x,y \in \R^n,
\]
where
$U_p(x) = e^{ix\cdot p}$ and $U_q(x) = e^{ix\cdot q}$ ($x, p \in \R^n$, $x\cdot p = \sum_j x_j p_j$ and so on)
denote $n$-parameter continuous groups of unitaries.
In terms of these unitaries, the Schr\"odinger representation of CCR takes the form
\[
(U_p(y)f)(x) = f(x + y),
\quad
(U_q(y))(x) = e^{ix\cdot y} f(x),
\qquad
f \in L^2(\R^n).
\]
Since unitaries
$U_q(y)$ generate all multiplication operators and any operator in its commutant is realized
again by multiplication, it is reduced to a scalar operator when it furthermore commutes
with the translation operators $U_p(x)$, i.e., the Schr\"odinger representation is irreducible
as is well-known.

Furthermore any irreducible (and continuous) representation of Weyl relations is known to be unitarily equivalent
to the Schr\"odinger representation (the Stone-von Neumann theorem).

For its proof, von Neumann put $U_p$ and $U_q$ together into the form
\[
  U(x,y) = e^{-ix\cdot y/2} U_p(x) U_q(y),
\]
which, being formally equal to $e^{i(xp + yq)}$, is
a continuous family of unitaries parametrized by $(x,y) \in \R^{2n}$ and satisfies
\[
U(x,y) U(x',y') = e^{i(xy' - x'y)/2} U(x+x',y+y').
\]
In other words,
the family $\{ U(x,y)\}$ is a unitary representation of the additive group $\R^{2n}$ twisted by
a two-cocyle $e^{i(xy' - x'y)/2}$.
Thus, relevant is not an euclidean structure but symplectic one governed by $x'y - xy'$.

To make this fact manifest and to simplify the notation at the same time,
regard $\{ p_j,q_j\}$ as a symplectic basis and think of $(x,y) \in \R^{2n}$ as coordinates of
an element $v = \sum (x_j p_j + y_j q_j)$ in a symplectic vector space $V$ with
the symplectic form $\sigma$ given by $\sigma(q_j,p_k) = \delta_{j,k}$ and
$\sigma(p_j,p_k) = 0 = \sigma(q_j,q_k)$.
Notice here that the Liouville measure in $V$ is exactly the Lebesgue measure in $\R^{2n}$.

Associated with such a real symplectic vector space $(V,\sigma)$, we introduce several *-algebras describing
Weyl relations.

Let $\C e^{iV}$ be a free vector space generated by symbols $e^{iv}$ ($v \in V$),
which is a *-algebra by operations
\[
  e^{iv} e^{iw} = e^{-i\sigma(v,w)/2} e^{i(v+w)},
  \quad
  (e^{iv})^* = e^{-iv}
\]
and referred to as a \textbf{Weyl algebra} based on $(V,\sigma)$.
A *-representation $\pi$ of $\C e^{iV}$ is then specified by a family $U(x,y) = \pi(e^{i(xp+yq)})$
of unitaries satisfying \textbf{Weyl relations},
which corresponds to Weyl unitaries exactly when $U(x,y)$ is continuous in parameters $(x,y) \in \R^{2n}$.

Note here that each linear functional $\lambda \in V^*$
gives to a *-automorphism of $\C e^{iV}$ (called \textbf{shift automorphism}) by
$e^{iv} \mapsto e^{if(v)} e^{iv}$ in such a way that it defines an automorphic action of
the additive group $V^*$ on the Weyl algebra $\C e^{iV}$.

We can also work with holomorphically extended objects $e^{v+iw}$ ($v,w \in V$)
which satisfy the obvious *-algebraic operations and span a *-supalgebra $\C e^{V+iV}$ of $\C e^{iV}$.

As an analysis-oriented one, consider the Banach space $L^1(\R^{2n}) = L^1(V)$ relative to the Liouville measure,
which is made into a Banach *-algebra
(denoted by $L^1(V,\sigma)$ and referred to as
a \textbf{convolution Weyl algebra})
so that
\[
  f \mapsto  \pi(f) = \int_V f(v) \pi(e^{iv})\, dv
  = \int_{\R^{2n}} f(xp + yq) U(x,y)\, dxdy
\]
gives a *-representation of $L^1(V,\sigma)$: For $f, g \in L^1(V)$,
\[
  (fg)(v) = \int_V e^{i\sigma(v,v')/2} f(v') g(v-v')\, dv',
  \quad
f^*(v) = \overline{f(-v)}.
\]
The automorphic action of $V^*$ on $\C e^{iV}$ by shifts is also converted to that on the convolution algebra by
$f(v) \mapsto e^{i\lambda(v)} f(v)$.

Notice that, as in the case of group algebras of locally compact groups,
a continuous representation $\pi(e^{iv})$ of $\C e^{iV}$ 
is in one-to-one correspondence with a *-representation $\pi(f)$ of $f \in L^1(V,\sigma)$.



\begin{Remark}
Thanks to the inequality $\| fg\|_2 \leq \| f\|_1 \| g\|_2$ for $f \in L^1(V)$ and $g \in L^2(V)$,
$L^1(V) \cap L^2(V)$ is a *-subalgebra of $L^1(V,\sigma)$,
which turns out to be a (unimodular) Hilbert algebra with respect to the $L^2$-inner product.
The associated trace $\tau$ is therefore described by
$\tau(f^*f) = (f|f)$ ($f \in L^1 \cap L^2$).
\end{Remark}




Although $e^{iv}$ itself is not in $L^1(V,\sigma)$,
a formal multiplication of $e^{iv}$ on $\int f(v') e^{iv'}\, dv'$
enables us to realize $e^{iv}$ as a multiplier of $L^1(V,\sigma)$:
\[
  (e^{iv}f)(v') = e^{-i\sigma(v,v')/2} f(v'-v),
  \quad
  (fe^{iv})(v') = e^{i\sigma(v,v')/2} f(v'-v).
\]
This multiplier realization is compatible with the shift automorphisms
on $\C e^{iV}$ and $L^1(V,\sigma)$.

To get an analogous realization of $\C e^{V+iV}$ as a multiplier algebra, we further introduce
a dense *-subalgebra $L^1_\varpi(V,\sigma)$ of $L^1(V,\sigma)$ consisting of entire functions:
By definition, a function $f \in L^1(V)$ belongs to $L^1_\varpi(V,\sigma)$ if
$f(v)$ is continuous in $v \in V$ and extended to an entirely analytic function
$f(v+iw)$ of $v + iw \in V^\C$ so that for each $\lambda \in V^*$
$e^{\lambda(v)} f(v+iw)$ is in $L^1(V)$ as a function of $v \in V$ and depends norm-continuously
on $w \in V$.

Given a function
$f \in L^1(V)$ of supexponential decay
(i.e., $f(v) = O(e^{-r|v|})$ for any $r>0$), its Gaussian regularization belongs to $L^1_\varpi(V)$.
In fact, given an inner product $\langle\cdot,\cdot\rangle$ in $V$, the Gaussian regularization
\[
  \phi(v) = \int_V e^{-\langle v',v'\rangle} f(v-v')\, dv'
\]
of $f$ is holomorphically extended and the expression
\begin{align*}
  \phi(v+iw) e^{\lambda(v)}
&= \int_V e^{-\langle v-v',v-v'\rangle - 2i\langle w,v-v'\rangle + \langle w,w\rangle + \lambda(v-v')}
                                                                   f(v') e^{\lambda(v')}\, dv'\\
&= \int_V e^{-\langle v-v'+u+iw,v-v'+u+iw\rangle + \langle u,u\rangle + 2i\langle u,w\rangle}
                                                                   f(v') e^{\lambda(v')}\, dv'
\end{align*}
with $u \in V$ defined by $\lambda(\cdot) = 2\langle u, \cdot \rangle$
shows that
\[
  \phi(v+iw) e^{\lambda(v)} = \varphi(v+u+iw) e^{\langle u,u\rangle + 2i\langle u,w\rangle}
\]
is norm-continuous in $w \in V$ as an $L^1(V)$-valued function,
where $\varphi$ is the Gaussian regularization of $e^\lambda f$.


Now for $f, g \in L^1_\varpi(V,\sigma)$, their product in $L^1(V,\sigma)$ is holomorphically extended to
\[
  (fg)(v+iw) = \int_V e^{i\sigma(v+iw,v')/2} f(v') g(v+iw-v')\, dv'.
\]
Note here that $e^{i\sigma(v+iw,v')} f(v')$ is integrable as a function of $v'$ which depends on $v+iw$ holomorphically.
Moreover, for $\lambda \in V^*$,
\[
  (fg)(v+iw) e^{\lambda(v)}
  = \int_V e^{\lambda(v') + \sigma(v',w)/2}f(v')\, e^{\lambda(v-v')}g(v+iw-v') e^{i\sigma(v,v')/2}\, dv'
\]
is continuous in $w \in V$ as an $L^1(V)$-valued function because this is
a product of $e^{\lambda(v) + \sigma(v,w)/2} f(v)$ and $g^{\lambda(v)} g(v+iw)$ in $L^1(V,\sigma)$ and
these are norm-continuous in $w \in V$ as $L^1(V)$-valued functions.

Finally, $e^{\lambda(v)} f^*(v+iw) = e^{\lambda(v)} \overline{f(-v + iw)}$
is holomorphic and continuous in $w \in V$ as an $L^1(V)$-valued function.

Here is a summary so far.

\begin{Lemma}
$L^1_\varpi(V,\sigma)$ is a dense *-subalgebra of $L^1(V,\sigma)$.
\end{Lemma}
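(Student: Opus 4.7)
My plan is to observe that the closure of $L^1_\varpi(V,\sigma)$ under the convolution-Weyl product and involution has already been verified by the formulae displayed just before the statement, so the genuine content of the lemma lies in density. I would therefore organise the proof in two parts: a brief consolidation of the subalgebra claim, followed by an approximate-identity argument for density.

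For the subalgebra claim, I would simply read off from the preceding computations that the formula given for $(fg)(v+iw)$ provides an entire extension, and that the displayed factorisation
\[
  (fg)(v+iw)e^{\lambda(v)} = \bigl(e^{\lambda(\cdot)+\sigma(\cdot,w)/2}f(\cdot)\bigr)\,\bigl(e^{\lambda(\cdot)}g(\cdot+iw)\bigr)
\]
(as a product inside $L^1(V,\sigma)$) exhibits it as a Banach-algebra product of two factors each already known to be norm-continuous in $w$, whence the required norm-continuity in $w$ of the product. Involution is immediate from $f^*(v+iw)=\overline{f(-v+iw)}$, and linearity is obvious.

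For density, I would start from the fact that continuous functions of compact support --- a fortiori of supexponential decay --- are dense in $L^1(V)$. The calculation in the excerpt already shows that any Gaussian regularisation of such a function lies in $L^1_\varpi(V,\sigma)$. To obtain a genuine approximation of a given $f \in L^1(V)$, I would rescale the inner product $\langle\cdot,\cdot\rangle$ to $\epsilon^{-1}\langle\cdot,\cdot\rangle$ and include the appropriate normalising constant, so that the resulting family of Gaussian kernels is an approximate identity with respect to Liouville measure as $\epsilon\downarrow 0$. Since the derivation in the excerpt is plainly insensitive to a positive scalar rescaling of $\langle\cdot,\cdot\rangle$, the rescaled regularisations still belong to $L^1_\varpi(V,\sigma)$, and $L^1$-convergence to $f$ is the standard approximate-identity statement. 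The only point that demands any thought is precisely this verification that the rescaled regularisations retain membership in $L^1_\varpi(V,\sigma)$, and it amounts to nothing more than a reinspection of the earlier calculation.
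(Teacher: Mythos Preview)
Your proposal is correct and matches the paper's approach. The paper presents the lemma explicitly as ``a summary so far'' of the computations immediately preceding it: closure under product and involution are exactly the displayed identities you cite, and density is obtained from the fact that Gaussian regularisations of supexponentially decaying functions lie in $L^1_\varpi(V,\sigma)$. Your only addition is to spell out the approximate-identity step (rescaling the Gaussian kernel so that the regularisation converges back to $f$ in $L^1$), which the paper leaves implicit; this is the natural and intended completion.
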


For $v+iw \in V + iV$, a formal identity
\[
  e^{iv-w} \int_V f(v') e^{iv'}\, dv' = \int_V f(v'-v-iw) e^{-i\sigma(v+iw,v')/2} e^{iv'}\, dv'
\]
and a similar expression for right multiplication suggest putting
\begin{align*}
  (e^{iv-w}f)(v') &= e^{-i\sigma(v+iw,v')/2} f(v'-v - iw),\\
  (fe^{iv-w})(v') &= e^{i\sigma(v+iw,v')/2} f(v'-v - iw).
\end{align*}

Now the following is immediate to check.

\begin{Lemma}
The *-algebra $\C e^{V+iV}$ is realized as a multiplier algebra of $L^1_\varpi(V,\sigma)$.
\end{Lemma}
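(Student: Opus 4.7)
The plan is to verify, in order, four items: (a) that the formulas define linear endomorphisms $L_{e^{iv-w}}$ and $R_{e^{iv-w}}$ of $L^1_\varpi(V,\sigma)$; (b) the bimodule identity $(fe^{iv-w})g = f(e^{iv-w}g)$ required of multiplier pairs; (c) that $e^{iv-w} \mapsto (L_{e^{iv-w}},R_{e^{iv-w}})$ respects the product relation $e^{iv-w}e^{iv'-w'} = e^{-i\sigma(v+iw,v'+iw')/2}\,e^{i(v+v')-(w+w')}$ of $\C e^{V+iV}$; and (d) that the involution $(e^{iv-w})^* = e^{-iv-w}$ is preserved. Once (a)--(d) are in hand, linear extension delivers a $*$-homomorphism from $\C e^{V+iV}$ to the multiplier algebra of $L^1_\varpi(V,\sigma)$.

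For (a), one inserts $v'+iw'$ into the definition and separates the exponent $-i\sigma(v+iw,v'+iw')/2$ into its real and imaginary parts. The real part produces a factor $e^{\sigma(w,v')/2}$ which, combined with a prescribed weight $e^{\lambda(v')}$, yields a new linear functional $\tilde\lambda(v') := \lambda(v') + \sigma(w,v')/2 \in V^*$. A translation $v' \mapsto v'+v$ then reduces the required $L^1(V)$-integrability and norm-continuity in $w'$ to the same property already assumed for $f$, so that $e^{iv-w}f \in L^1_\varpi(V,\sigma)$; the argument for right multiplication is symmetric.

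For (b), writing out both sides as iterated convolutions produces, after rearrangement of the symplectic phase factors, two integrals that coincide via the contour shift $v' \mapsto v' + v + iw$ in the convolution variable. The shift is legitimate because the integrand is jointly entire in $v'$ and because the weighted $L^1$-decay built into the definition of $L^1_\varpi(V,\sigma)$ kills the boundary contributions; this is the step that genuinely requires working inside $L^1_\varpi$ rather than $L^1$. For (c), direct composition of the defining kernels and bilinear expansion of $\sigma(v+iw,v'+iw')$ produces exactly the holomorphic continuation of the product rule $e^{iv}e^{iv'} = e^{-i\sigma(v,v')/2}e^{i(v+v')}$ already known on $\C e^{iV}$. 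For (d), substituting $f^*(v') = \overline{f(-v')}$ into the defining formula and conjugating recovers the formula for $e^{-iv-w}$ acting from the opposite side, which is exactly the multiplier involution.

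The main obstacle is (a), together with the contour shift in (b): neither step makes sense for a generic $L^1$-function, and the introduction of $L^1_\varpi(V,\sigma)$ in the previous lemma is precisely what legitimizes the complex translation $v' \mapsto v' - v - iw$ appearing in the multiplier formulas. Once that is secured, the remaining items are formal identities that reproduce the algebraic structure of $\C e^{V+iV}$.
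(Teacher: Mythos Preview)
Your proposal is correct and matches the paper's intent: the paper simply states that the lemma is ``immediate to check'' after writing down the multiplier formulas, and your items (a)--(d) are exactly the verifications one must perform. Your identification of the contour shift in (b) as the point where the $L^1_\varpi$ hypothesis is genuinely needed is the right emphasis.
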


\section{von Neumann's Projection and Matrix Units}
A hermitian element in $L^1(V,\sigma)$ is said to be \textbf{Gaussian} if it is of the form
$e^{-(v|v) + i\lambda(v) + \mu}$, where $\mu \in \R$, $\lambda \in V^*$ and $(v|w)$ is a real inner product in $V$.
Clearly Gaussian elements belong to $L^1_\varpi(V,\sigma)$ with their linear parts $\lambda(v)$ realized
by the effect of the shift automorphism associated to $\lambda \in V^*$.

Given an inner product $(v|w)$ in a finite-dimensional symplectic vector space $(V,\sigma)$,
thanks to the standard form of symplectic matrices,
we can find an orthonormal basis $\{ e_j,f_j\}$ of $V$ so that $\sigma_j = \sigma(f_j,e_j) > 0$
and $\{ \R e_j + \R f_j\}$ is a $\sigma$-orthogonal family.
Then $\{ p_j = \sigma_j^{-1/2} e_j, q_j = \sigma_j^{-1/2}f_j\}$
is a canonical basis which also diagonalizes the inner product so that
$(p_j|p_j) = (q_j|q_j) = 1/\sigma_j$.
In this way, Gaussian elements are factored into two-dimensional ones and their spectral properties are
more or less reduced to the case of single freedom.


With this observation in mind, we now
introduce an element $g_\alpha(x,y) = e^{-(x^2+y^2)/4\alpha}$
($\alpha \in \C$, $\text{Re}\,\alpha>0$) in $L^1(V,\sigma)$
for the canonical choice $V = \R^{2n}$ ($g_\alpha$ being Gaussian for $\alpha > 0$)
and investigate its spectral properties in the convolution Weyl algebra $L^1(V,\sigma)$
or in its C*-envelope $C^*(V,\sigma)$. Notice that in terms of the canonical coordinates,
the convolution product is expressed by
\[
  (fh)(x,y) = \int_{\R^{2n}} e^{i(x'y-xy')/2} f(x',y') h(x-x',y-y')\, dx' dy'.
\]

By a simple calculation based on Gaussian integrals, we see that
\[
(g_\alpha g_\beta)(x,y) = \left( \frac{4\pi\alpha\beta}{\alpha + \beta} \right)^n
\exp\left(-\frac{x^2 + y^2}{4\gamma}
\right),
\quad
\gamma = \frac{\alpha + \beta}{1 + \alpha\beta}.
\]
Since each $\gamma \leq 1$ is of the form $\gamma = 2/(\alpha + \alpha^{-1})$
for the choice $\beta^{-1} = \alpha > 0$,
the formula implies that $g_\gamma = g_\alpha^2 = g_{1/\alpha}^2 \geq 0$.
Consequently $g_\gamma^{1/2} = g_\alpha \geq 0$ for $\alpha \leq 1$, whereas
$g_{1/\alpha} \not= g_\alpha$ $(\alpha < 1$) is a non-positive root of $g_\gamma$.
In particular, when $\alpha = \beta = 1$,
\[
g(x,y) = \frac{1}{(2\pi)^n} e^{-(x^2+y^2)/4}
\]
is a projection in $L^1(V,\sigma) \subset C^*(V,\sigma)$. 

In von Neumann's proof of uniqueness of Weyl unitaries,
it is a key to observe that the projection $g \in L^1(V,\sigma)$ is minimal in the sense that
$ge^{i(xp + yq)}g = e^{-(x^2+y^2)/4} g$ ($x,y \in \R^n$).

We shall now construct a system of
matrix units in $L^1(V,\sigma)$ so that it includes the projection $g$ as a diagonal member.
To elucidate the role played by $g$ in finding matrix units inside $L^1(V,\sigma)$, introduce complex parameters
$z = (x+iy)/\sqrt{2}$, $w = (x'+iy')/\sqrt{2}$ and rewrite
\begin{multline*}
(e^{i(xp+yq)} g e^{i(x'p+y'q)})(s,t)\\
= e^{i(s(y'-y) - t(x'-x))/2} e^{i(x'y-xy')/2} g(s-x-x',t-y-y')\\
= \frac{1}{(2\pi)^n} e^{-((s-x-x')^2 + (t-y-y')^2)/4} e^{i(s(y'-y) - t(x'-x))/2} e^{i(x'y-xy')/2}
\end{multline*}
to get the expression
\[
\frac{1}{(2\pi)^n} e^{-(s^2+t^2)/4} e^{-(|z|^2 + |w|^2)/2}
\exp\left( \frac{s+it}{\sqrt{2}} \overline{z}
+ \frac{s-it}{\sqrt{2}} w - \overline{z}w \right).
\]
In view of $i(xp + yq) = za - \overline{z}a^*$ and the identity
$e^{i(xp + yq)} = e^{-|z|^2/2} e^{-\overline{z}a^*} e^{za}$,
with annihilators $a = (q+ip)/\sqrt{2}$ and creators
$a^* = (q - ip)/\sqrt{2}$ satisfying $[a_j,a_k^*] = \delta_{j,k}$,
\[
e^{-\overline{z}a^*} e^{za}g e^{-\overline{w}a^*} e^{wa}
= e^{(|z|^2+|w|^2)/2} e^{i(xp + yq)} g e^{i(x'p + y'q)}
\]
is expressed by a function
\[
\frac{1}{(2\pi)^n}
\exp\left( -\frac{s^2+t^2}{4} + \frac{s+it}{\sqrt{2}} \overline{z}
  + \frac{s-it}{\sqrt{2}} w - \overline{z}w \right)
\]
of $(s,t) \in \R^{2n}$ in $L^1_\varpi(V,\sigma)$.

Here we pay attention to the analytic dependence on parameters $z$, $w$.
Clearly the last function is antiholomorphic in $z$ and holomorphic in $w$, whence
the same behavior of $e^{-\overline{z}a^*} e^{za}g e^{-\overline{w}a^*} e^{wa}$
reveals that $e^{za}g e^{-\overline{w} a^*} = g$ and 
\[
(e^{-\overline{z}a^*} g e^{wa})(s,t)
= \frac{1}{(2\pi)^n}
\exp\left( -\frac{s^2+t^2}{4} + \frac{s+it}{\sqrt{2}} \overline{z}
+ \frac{s-it}{\sqrt{2}} w - \overline{z}w \right).
\]
Now the minimality can be read off from these: Thanks to the Weyl relations and $g^2 = g$,
\[
e^{|z|^2/2} ge^{za} e^{-\overline{z} a^*} g = g e^{za - \overline{z} a^*} g
= e^{-|z|^2/2} g e^{- \overline{z} a^*} e^{za} g
= e^{-|z|^2/2} g,
\]
whence $ge^{i(xp + yq)} g = e^{-(x^2 + y^2)/4} g$.

Moreover, from the fact that the coefficient in the right hand side is equal to the evaluation of
$e^{i(xp + yq)}$ by the Fock state $\omega$, one sees that
$C^*(V,\sigma) g C^*(V,\sigma)$ is *-isomorphic to a *-subalgebra
\[
  \pi(C^*(V,\sigma))|\omega^{1/2})(\omega^{1/2}| \pi(C^*(V,\sigma))
\]
of $\sB(\overline{C^*(V,\sigma)\omega^{1/2}})$ via
the correspondence $fgh \leftrightarrow \pi(f)|\omega^{1/2}) (\omega^{1/2}| \pi(h)$ ($f,h \in L^1(V,\sigma)$),
where $\pi$ denotes the standard Fock representation of $C^*(V,\sigma)$
with $\omega^{1/2}$ the Fock vacuum vector.

Since $g_{z,w} = e^{(|z|^2 + |w|^2)/2} e^{-\overline{z} a + za^*}g e^{w a  - \overline{w}a^*} = e^{za^*} g e^{wa}$
is holomorphic as an $L^1(\R^{2n})$-valued function of $z,w \in \C^n$,
it allows a Taylor expansion\footnote{Formally $g_{k,l} = \frac{1}{\sqrt{k!l!}} (a^*)^k g a^l$ and it corresponds to
  $\frac{1}{\sqrt{k!l!}}(a^*)^k |\omega^{1/2})(\omega^{1/2}| a^l$ in the Fock representation.}
\[
  g_{z,w} = \sum_{k,l \geq 0} \frac{1}{\sqrt{k!l!}} z^kw^l g_{k,l},
  \quad
  g_{k,l} \in L^1(\R^{2n}).
\]
We claim that $(g_{k,l})_{k,l \geq 0}$ constitute matrix units in $L^1(V,\sigma)$.
In fact, $g_{\overline{w},\overline{z}}^* = g_{z,w}$ and $g_{z,w} g_{z',w'} = e^{wz'} g_{z,w'}$
give $g_{k,j}^* = g_{j,k}$ and $g_{j,k} g_{l,m} = \delta_{k,l} g_{j,m}$ respectively.

Finally we show that $\{ g_{z,w}; z, w \in \C^n\}$ is total in $L^1(\R^{2n})$.
To see this, let $f \in L^\infty(\R^{2n}) = L^1(\R^{2n})^*$ satisfy
\begin{multline*}
0 = \int_{\R^{2n}} f(s,t) g_{z,w}(s,t)\, dsdt\\
= \frac{e^{zw}}{(2\pi)^n} \int f(s,t) e^{-(s^2+t^2)/4 -(s+it)z/\sqrt{2} + (s-it)w/\sqrt{2}}\, dsdt
\end{multline*}
for any $z, w$. Then the Fourier transform of $f(s,t)e^{-(s^2+t^2)/4}$ vanishes and hence $f = 0$.

\begin{Theorem}
  The C*-envelope $C^*(V,\sigma)$ of $L^1(V,\sigma)$ is a compact operator algebra
  generated by matrix units $\{ g_{k,l} \}$ and
  continuous Weyl unitaries are unitarily equivalent to an ampliation of the Fock representation.

  Moreover the canonical trace is given by the formula
  \[
    \text{tr}(f^*f) = (2\pi)^n \int_V |f(v)|^2\, dv
  \]
  for $f \in L^1(V) \cap L^2(V)$.
  Note that $(f^*f)(v) = (f|fe^{-iv})$ is a continuous function of $v \in V$ and
  $\text{tr}(f^*f) = (2\pi)^n (f^*f)(0)$.
\end{Theorem}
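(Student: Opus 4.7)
The plan is to establish the three assertions in sequence, each built on the matrix-unit construction developed above.

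For the first assertion, I would start from the verified matrix-unit relations $g_{k,l}^*=g_{l,k}$ and $g_{j,k}g_{l,m}=\delta_{k,l}g_{j,m}$, which make the linear span $\sE_0$ of $\{g_{k,l}\}_{k,l\in\N^n}$ a *-subalgebra of $L^1(V,\sigma)$ algebraically isomorphic to the finitely-supported matrix algebra $M_\infty$. Since $(z,w)\mapsto g_{z,w}$ is an entire $L^1(V,\sigma)$-valued function on $\C^{2n}$, its Taylor series $g_{z,w}=\sum \frac{z^k w^l}{\sqrt{k!l!}}\,g_{k,l}$ converges in $L^1$-norm uniformly on compacta, so each $g_{z,w}$ lies in the $L^1$-closure of $\sE_0$; combined with the totality of $\{g_{z,w}\}$ in $L^1(V,\sigma)$ established earlier, this forces $\sE_0$ to be $L^1$-dense, hence dense in $C^*(V,\sigma)$ in the universal C*-norm (which is dominated by $\|\cdot\|_1$). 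The Fock representation sends $g_{k,l}$ to canonical rank-one matrix units of Fock space (per the footnote), certifying that each matrix unit has norm one; by uniqueness of the C*-norm on $M_\infty$, we obtain $C^*(V,\sigma)\cong\cK(\ell^2(\N^n))$.

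The second assertion follows from the already-noted bijective correspondence between continuous unitary representations of the Weyl relations and non-degenerate *-representations of $L^1(V,\sigma)$, which extend uniquely to non-degenerate *-representations of $C^*(V,\sigma)\cong\cK$. The classical structure theorem for representations of the compact operators asserts that every non-degenerate *-representation of $\cK(\cH)$ is unitarily equivalent to an ampliation of its identity representation on $\cH$; under our identification this is precisely an ampliation of the Fock representation.

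For the trace formula, I would observe that both $\text{tr}$ and $f\mapsto(2\pi)^n\|f\|_2^2$ restrict to positive tracial functionals on the Hilbert *-algebra $L^1\cap L^2$ and extend to normal semifinite traces on the type $I_\infty$ factor $C^*(V,\sigma)\cong\cK$; uniqueness of the trace up to positive scalar on such a factor forces them to be proportional. The constant is pinned down by evaluating at the minimal projection $g=g_{0,0}=(2\pi)^{-n}e^{-(x^2+y^2)/4}$: a Gaussian integral yields $\|g\|_2^2=(2\pi)^{-n}$ while $\text{tr}(g)=1$, so the two agree. The supplementary identities $(f^*f)(v)=(f|fe^{-iv})$ and $\text{tr}(f^*f)=(2\pi)^n(f^*f)(0)$ then drop out immediately from the convolution formula together with norm-continuity of the right-multiplier action $v\mapsto e^{-iv}$ on $L^1(V,\sigma)$.

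The most delicate step is the first one — specifically, upgrading the totality of $\{g_{z,w}\}$ to $L^1$-norm density of $\sE_0$, which rests on $L^1$-norm convergence of the Taylor expansion of the $L^1(V,\sigma)$-valued entire function $(z,w)\mapsto g_{z,w}$. This is standard for Bochner-holomorphic functions but warrants explicit justification, since it is strictly stronger than the pointwise convergence one gets for free.
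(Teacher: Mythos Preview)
Your proposal is correct and tracks the paper's own argument closely: the paper presents the theorem as a summary of the preceding constructions (matrix-unit relations, the explicit correspondence $fgh\leftrightarrow \pi(f)|\omega^{1/2})(\omega^{1/2}|\pi(h)$ with the Fock representation, and totality of $\{g_{z,w}\}$ in $L^1$), and your write-up simply makes two steps more explicit—the $L^1$-norm convergence of the Taylor expansion of the Banach-space-valued holomorphic function $g_{z,w}$ (needed to pass from totality of $\{g_{z,w}\}$ to density of $\sE_0$), and the Gaussian normalization $\|g\|_2^2=(2\pi)^{-n}$ pinning down the trace constant. The only stylistic difference is that the paper identifies $C^*(V,\sigma)$ with compact operators via the concrete Fock-space correspondence for $C^*(V,\sigma)gC^*(V,\sigma)$, whereas you invoke uniqueness of the C*-norm on $M_\infty$; both routes are standard and equivalent here.
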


\begin{Corollary}[Stone-von Neumann]　
  Irreducible Weyl unitaries are unique up to unitary equivalence.　
\end{Corollary}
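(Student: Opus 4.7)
The plan is to deduce the corollary immediately from the preceding theorem. First I would record that a continuous family of Weyl unitaries $\{U(x,y)\}$ corresponds bijectively to a non-degenerate *-representation $\rho$ of $L^1(V,\sigma)$, and hence of its C*-envelope $C^*(V,\sigma)$, in such a way that a closed subspace is stable under all $U(x,y) = \rho(e^{iv})$ if and only if it is stable under all $\rho(f)$ with $f \in L^1(V,\sigma)$. The key point is that the operators $\rho(f)$ arise as weak-operator limits of Riemann sums of the $U(x,y)$ weighted by $f$, while conversely $\C e^{iV}$ sits in the multiplier algebra of $L^1(V,\sigma)$, so the two notions of invariance coincide. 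Consequently the Weyl family is irreducible exactly when the associated *-representation of $C^*(V,\sigma)$ is so.

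Next I would invoke the theorem, which identifies $C^*(V,\sigma)$ with the compact operator algebra generated by the matrix units $\{g_{k,l}\}$ acting on the Fock space $\sH = \overline{C^*(V,\sigma)\omega^{1/2}}$, and which asserts that every continuous representation of Weyl unitaries is unitarily equivalent to an ampliation $\pi \otimes 1_\cM$ of the Fock representation $\pi$ on $\sH \otimes \cM$. The commutant of such an ampliation is $\C 1_\sH \otimes \sB(\cM)$, which is scalar precisely when $\dim \cM = 1$; in that case the representation coincides with the Fock representation itself up to unitary equivalence. This delivers the Stone-von Neumann uniqueness.

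The only delicate point beyond citing the theorem is the equivalence between irreducibility of the unitary family and that of the corresponding *-representation of $C^*(V,\sigma)$, as sketched above. Since the theorem has already accomplished the real work — the construction of the matrix units $g_{k,l}$ and the identification of $C^*(V,\sigma)$ with the compact operator algebra — no serious obstacle is anticipated in this final step.
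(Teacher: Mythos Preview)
Your proposal is correct and follows exactly the route the paper intends: the corollary is stated without a separate proof because it is immediate from the theorem, and your argument---that an ampliation $\pi \otimes 1_{\cM}$ is irreducible precisely when $\dim \cM = 1$---is the obvious way to extract it. The correspondence between continuous Weyl unitaries and non-degenerate representations of $L^1(V,\sigma)$ that you spell out is already established in Section~1 of the paper, so your elaboration is simply making explicit what the paper leaves implicit.
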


\begin{Remark}~
  {\small
    \begin{enumerate}
      \item
  Since $C^*(V,\sigma)$ is separable, the theorem itself follows from the corollary
  as an affirmative case of Naimark's problem \cite{RoA}.
\item
  An explicit form of $g_{k,l}$ is dealt with in \cite{Ha}
  in the context of Fock representation.
\item
  The fact that $C^*(V,\sigma)$ is a compact operator algebras also follows from
  $C_0(G) \rtimes G \cong \sC(L^2(G))$ for a locally compact group $G$,
  where  the corssed product is taken with respect to the translational automorphic action of $G$ on
  $C_0(G)$ and $\sC(L^2(G))$ is the compact operator algebra on $L^2(G)$,
  see \cite{Ta} for more information. 
\end{enumerate}
}
\end{Remark}

\section{Spectral Analysis of  Gaussian Elements}
We here restrict ourselves to the case $V = \R^2$ for the time being and
work out the spectral decomposition of $g_\gamma$ ($\text{Re}\,\gamma > 0$)
in the C*-algebra $C^*(V,\sigma)$.
Since $e^{-\overline{z}a^*} g e^{za} \in L^1_\varpi(V,\sigma)$ ($z = (x+iy)/\sqrt{2}$) is realized by the function
\[
  \frac{1}{2\pi}
  \exp\left( - \frac{s^2+t^2}{4} + sx + ty - \frac{x^2+y^2}{2} \right),
\]
of $(s,t) \in \R^2 = V$, its parameter dependence on $z \in \C$ is $L^1$-continuous and
$\| e^{-\overline{z}a^*} g e^{za} \|_1 = 2 e^{(x^2+y^2)/2}$.
Thus, if $\text{Re}\,\mu > 1$,
$e^{-\mu |z|^2} e^{-\overline{z}a^*} g e^{za} \in L^1(V,\sigma)$ is norm-integrable with respect to $z$
and we see that
\[
  g_\gamma = \frac{2\gamma}{\gamma -1} \int_{\R^2}
  e^{-\mu|z|^2} e^{-\overline{z}a^*} g e^{za}\, dxdy
\]
for the choice $\mu = (\gamma + 1)/(\gamma - 1)$
with $\text{Re}\, \mu > 1 \iff \text{Re}\,\gamma > 1$.

We now think of this integral expression in $C^*(V,\sigma)$,
where the Taylor expansion
\[
  e^{-\overline{z}a^*} g e^{za}
  = \sum_{k,l \geq 0} \frac{(-\overline{z})^k z^l}{\sqrt{k!l!}} g_{k,l}
\]
is absolutely convergent in view of the estimate
\[
  \|   e^{-\overline{z}a^*} g e^{za} \|
  \leq \sum_{k,l \geq 0} \frac{|z|^{k+l}}{\sqrt{k!l!}},
\]
and the integration is then calculated termwise to get the expression
\begin{align*}
  g_\gamma &= \frac{2\gamma}{\gamma -1}
      \sum_{k,l \geq 0} \frac{1}{\sqrt{k!l!}} \int e^{-\mu|z|^2}
              (-\overline{z})^k z^l g_{k,l}\, dxdy\\
           &= 2\pi \frac{2\gamma}{\gamma -1}
             \sum_{k \geq 0} \frac{(-1)^k}{k!} \int_0^\infty
             e^{- \mu r^2/2} r^{2k+1}\, dr\ g_{k,k}\\
           &= 2\pi \frac{2\gamma}{\gamma -1} \sum_{k=0}^\infty \frac{(-1)^k}{\mu^{k+1}} g_{k,k}
           = 2\pi \frac{2\gamma}{1+\gamma} \sum_{k=0}^\infty
             \left(\frac{1-\gamma}{1+\gamma}\right)^k g_{k,k}
\end{align*}
for $\text{Re}\, \mu > 1 \iff \text{Re}\, \gamma > 1$.
Since both $g_\gamma$ and the last expression are holomorphic on $\text{Re}\,\gamma\, >0$
as elements in $C^*(V,\sigma)$,
the above equality holds in the region $\text{Re}\, \gamma > 0$ as well.

Notice that $g_\gamma$ ($\text{Re}\,\gamma > 0$) is a normal element of trace class with
\[
  \text{tr}(g_\gamma) = 2\pi \frac{2\gamma}{1+\gamma}
  \sum_{k=0}^\infty \left( \frac{1-\gamma}{1+\gamma} \right)^k = 2\pi.
\]

\begin{Remark}
{\small
  In the Schr\"odinger representation, the above spectral decomposition of $g_\gamma$ is reduced to
  Mehler's formula in \cite[\S 1.7]{Fo}.
}
\end{Remark}

Now consider the general case $V = \R^{2n}$ and
let $\{ \gamma_j\}_{1 \leq j \leq n}$ be positive eigenvalues of $i\sigma/4$ including multiplicity
with respect to the quadratic part of a Gaussian element.
Then the spectrum of $\exp(-\sum_j (x_j^2+y_j^2)/4\gamma_j)$ in $C^*(V,\sigma)$ is
\[
  (2\pi)^n \prod_j \frac{2\gamma_j}{1+\gamma_j} \left\{
\prod_j \left( \frac{1-\gamma_j}{1+\gamma_j} \right)^{l_j}; l_j \geq 0 \right\}.
\]

In particular, the Gaussian element is positive if and only if $\gamma_j \leq 1$ for $1 \leq j \leq n$.
If this is the case, the $r$-th power ($r>0$) of $g_\gamma$ with $\gamma = \tanh\theta$ ($\theta>0$)
is again Gaussian and given by
\[
  (4\pi)^{r-1} \frac{(\sinh \theta)^r}{\sinh(r\theta)}
  \exp\left( - \frac{x^2+y^2}{4\tanh(r\theta)} \right)
\]
as seen in \cite{SF}.

As a final remark, we describe the following known fact
(see \cite[Theorem~5.5.1]{Ho} for example) in our context.

\begin{Proposition}
  There is one-to-one correspondence between free states of $C^*(V,\sigma)$
  and positive normalized Gaussian elements without scalar shifts.
\end{Proposition}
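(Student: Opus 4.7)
The plan is to identify both sides of the correspondence as density-operator/state pairs parameterized by the same positive definite quadratic form, and then match their defining positivity inequalities.

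For the forward direction, given a positive normalized centered Gaussian $g = e^{-(v|v) + \mu}$, define the state $\omega_g(a) = \text{tr}(ga)$ on $C^*(V,\sigma)$. The trace formula from the theorem, polarized to $\text{tr}(f^*h) = (2\pi)^n (f^*h)(0)$ for $f,h \in L^1 \cap L^2$ and extended to trace-class elements via the spectral expansion in the matrix units $\{g_{k,l}\}$ (for which $g_{k,k}(0) = (2\pi)^{-n}$), gives the general identity $\text{tr}(a) = (2\pi)^n a(0)$. Combining this with the multiplier action $(ge^{iv})(v') = e^{-i\sigma(v,v')/2} g(v'-v)$ evaluated at $v' = 0$, I obtain
\[
  \omega_g(e^{iv}) = (2\pi)^n g(-v) = (2\pi)^n e^{-(v|v) + \mu}.
\]
The normalization $\omega_g(1) = 1$ forces $e^\mu = (2\pi)^{-n}$, hence $\omega_g(e^{iv}) = e^{-(v|v)}$, a Gaussian characteristic function with no linear term; that is, $\omega_g$ is a free state with covariance $\kappa(v) = 2(v|v)$.

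For the reverse direction, a free state $\omega$ has $\omega(e^{iv}) = e^{-\kappa(v)/2}$ for some positive definite quadratic form $\kappa$ subject to a positivity constraint. Set $g(v) := (2\pi)^{-n} e^{-\kappa(v)/2}$; this is a centered Gaussian in $L^1(V,\sigma)$, and reversing the calculation above recovers $\omega(a) = \text{tr}(ga)$ on continuous Weyl unitaries, hence on all of $C^*(V,\sigma)$ by density. Bijectivity is then automatic, since both sides are determined by the same quadratic form through $\kappa = 2(\cdot|\cdot)$, and the condition ``no scalar shift'' corresponds on the state side to the absence of a linear phase in $\omega(e^{iv})$.

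The main obstacle is matching the two positivity conditions. Reducing $\kappa$ by the standard symplectic form argument used in the previous section to $\kappa(v) = \sum_j c_j (x_j^2 + y_j^2)$, the spectral analysis already performed shows that $g \geq 0$ in $C^*(V,\sigma)$ iff each $\gamma_j = 1/(2c_j) \leq 1$, i.e.\ $c_j \geq 1/2$. On the state side, $\omega$ is positive iff the kernel $(v,w) \mapsto e^{i\sigma(v,w)/2} e^{-\kappa(v-w)/2}$ is of positive type; by the same symplectic diagonalization this decouples into single-mode problems, each yielding precisely $c_j \geq 1/2$. The two inequalities therefore coincide, and the map $g \leftrightarrow \omega_g$ is the asserted bijection. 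The hardest step is the state-side criterion, a CCR version of Bochner's theorem, but once the symplectic normal form is invoked it collapses to the same single-mode spectral inequality that governs $g \geq 0$.
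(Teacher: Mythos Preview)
Your proof is correct and follows essentially the same route as the paper: density-operator realization via the trace formula, then reduction of the positivity matching to single modes by symplectic diagonalization. The only real difference is cosmetic: for the single-mode positivity check the paper writes down the $2\times 2$ covariance form $S$ with $S+\overline{S}=\operatorname{diag}(1/\gamma,1/\gamma)$ and $S-\overline{S}=i\sigma$, and reads off $S\geq 0 \iff 0<\gamma\leq 1$ directly, whereas you phrase the state-side condition as positive-definiteness of the kernel $e^{i\sigma(v,w)/2}e^{-\kappa(v-w)/2}$ and assert it collapses to $c_j\geq 1/2$. These are equivalent formulations of the same criterion; the paper's matrix check is what actually proves the single-mode inequality you invoke. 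Your treatment of the trace side $\omega_g(e^{iv})=(2\pi)^n g(-v)$ is more explicit than the paper's, which simply cites the density-operator realization from an external reference.
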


\begin{proof}
  We assume notations and results on free states in \cite{gqfs}.
  Through the density operator realization, free states are given by positive normalized Gaussian elements.

  Conversely, a positive Gaussian element is factored into single-freedom ones and we may assume that $V = \R^2$.
Define a sesquilinear form $S$ on $V^\C$ by
\[
S + \overline{S} =
\begin{pmatrix}
  \frac{1}{\gamma} & 0\\
  0 & \frac{1}{\gamma}
\end{pmatrix},
\quad
S - \overline{S} = i\sigma =
\begin{pmatrix}
  0 & -i\\
  i & 0
\end{pmatrix}.
\]
Then
\[
S = \frac{1}{2}
\begin{pmatrix}
  1/\gamma & -i\\
  i & 1/\gamma
\end{pmatrix} \geq 0
\iff 0 < \gamma \leq 1
\]
shows that the Gaussian element $g_\gamma$ gives the density operator of a free state.
\end{proof}

\end{document}